\newtheorem{theorem}{Theorem}[section]
\newtheorem{lemma}[theorem]{Lemma}
\theoremstyle{definition}
\newtheorem{remark}[theorem]{Remark}
\begin{document}
	
\begin{frontmatter}
		
\title{A New Compartmental Epidemiological Model for COVID-19\\ with a Case Study of Portugal}

\author[Add:a]{\href{https://orcid.org/0000-0003-4592-3426}{Ana P. Lemos-Pai\~{a}o}}
\ead{anapaiao@ua.pt}

\author[Add:a]{\href{https://orcid.org/0000-0002-7238-546X}{Cristiana J. Silva}}
\ead{cjoaosilva@ua.pt}

\author[Add:a]{\href{https://orcid.org/0000-0001-8641-2505}{Delfim F. M. Torres}\corref{corD}}
\ead{delfim@ua.pt}
\cortext[corD]{Corresponding author: delfim@ua.pt}

\address[Add:a]{Center for Research and Development in Mathematics and Applications (CIDMA),\\ 
Department of Mathematics, University of Aveiro, 3810-193 Aveiro, Portugal}


\begin{abstract}
We propose a compartmental mathematical model for the spread 
of the COVID-19 disease, showing its usefulness with respect
to the pandemic in Portugal, from the first recorded case 
in the country till the end of the three states of emergency.
New results include the compartmental model, described 
by a system of seven ordinary differential equations;
proof of positivity and boundedness of solutions;
investigation of equilibrium points and their stability analysis;
computation of the basic reproduction number; 
and numerical simulations with official real data
from the Portuguese health authorities.
Besides completely new, the proposed model 
allows to describe quite well the spread of COVID-19 in Portugal, 
fitting simultaneously not only the number of active infected individuals 
but also the number of hospitalized individuals, respectively with a $L^2$ 
error of $9.2152e-04$ and $1.6136e-04$ with respect to the initial population.
Such results are very important, from a practical point of view, 
and far from trivial from a mathematical perspective. Moreover, the obtained 
value for the basic reproduction number is in agreement with the one 
given by the Portuguese authorities at the end of the emergency states.
\end{abstract}


\begin{keyword}
Mathematical modeling
\sep epidemiology 
\sep COVID-19 pandemic
\sep coronavirus disease 
\sep qualitative theory
\sep numerical simulations with real data.
	
\medskip
	
\MSC[2010]{34C60 \sep 92D30.}	
\end{keyword}

\end{frontmatter}


\section{Introduction}

The COVID-19 pandemic, also known as the coronavirus pandemic, 
is an ongoing pandemic of coronavirus disease 2019 (COVID-19) 
caused by severe acute respiratory syndrome coronavirus 2 (SARS-CoV-2). 
The outbreak was identified in Wuhan, China, in December 2019.
The World Health Organization (WHO) declared the outbreak a Public Health Emergency 
of international concern on 30 January 2020, and a pandemic on 11 March 2020. 

It is unanimous that the COVID-19 pandemic is the most significant global crisis
of all human history, exceeding the size and range of the repercussions 
of a World War: it has literally affected all the countries of our planet
with serious health, social, and economic consequences. 
On June 4 2020, 213 of the 247 countries or territories recognized by the United Nations 
had at least one officially case of infection by COVID-19 and 185 had 
recorded at least one death victim. Just focusing on the health consequences 
of the pandemic, it is consensual they are devastating: COVID-19 deaths 
exceeded 350,000 by June 2020 and continue to grow \cite{Worldometer}.

If the COVID-19 pandemic has changed society worldwide, 
mathematics is no exception. Changes are not only at the level
of universities and research centers physically closed, of international 
traveling ban or on the rise of online seminars and conferences. 
It is also apparent that COVID-19 has called attention of society about
the importance of mathematics in epidemiology and the relevance 
of mathematical modeling for a better understanding of the COVID-19 
health crisis \cite{CEM}. 

The spread of infectious diseases has been studied 
by mathematicians for a long time. The earliest account 
of mathematical modeling of the spread of a disease 
was carried out in 1766 by Daniel Bernoulli: he 
created a mathematical model to defend the practice 
of inoculating against smallpox, showing that universal 
inoculation would increase the life expectancy 
from 26 to 29 years \cite{Hethcote}. 

The early 20th century saw the emergence of mathematical compartmental models. 
The Kermack--McKendrick epidemic model (1927) and the Reed--Frost epidemic model (1928), 
both describe the relationship between susceptible, infected and immune individuals 
in a population. In particular, the Kermack--McKendrick epidemic model
has shown to be successful in predicting the behavior of many observed 
recorded epidemics \cite{MR1822695}. These basic but fundamental
Kermack--McKendrick type models still serve the basis to current 
mathematical research on epidemiology, e.g., for Ebola \cite{MR3810766}, 
TB \cite{MR3999702}, HIV \cite{MR3813045}, and cholera \cite{MR3602689}.

The use of quarantine for controlling epidemic diseases 
has always been controversial, because such strategy raises political, 
ethical, and socio-economic issues and requires a careful balance between 
public interest and individual rights \cite{Tognotti:quarantine}. 
Quarantine is adopted as a mean of separating persons, animals, 
and goods that may have been exposed to a contagious disease. Since 
the fourteenth century, quarantine has been the cornerstone 
of a coordinated disease-control strategy, including isolation, 
sanitary cordons, bills of health issued to ships, fumigation, 
disinfection, and regulation of groups of persons who were believed 
to be responsible for spreading of the infection \cite{Matovinovic,Tognotti:quarantine}.
Never before quarantine was adopted so widely as with COVID-19:   
more than half of the entire mankind has been affected by drastic restrictions 
in their movements and social relationships because of COVID-19 \cite{BOCCALETTI2020109794}.

Recently, several compartmental models have been proposed for COVID-19 spread 
in different regions of the world: see, e.g., \cite{Contreras} 
for a multi-group SEIRA model; \cite{Maier} for the impact of effective 
containment in the growth of confirmed cases in the outbreak in China; 
\cite{Crokidakis} for a COVID-19 spread study in Brazil; \cite{R3} 
for a COVID-19 model that considers media coverage effects; 
\cite{MR4160225:R2} for a within-host model, which describes 
the interactions between SARS-CoV-2, host pulmonary epithelial cells, 
and cytotoxic T lymphocyte cells; and \cite{mor:r3}, where 
a stochastic time-delayed model for the effectiveness of Moroccan 
COVID-19 deconfinement strategy is proposed and investigated.

Here we use mathematical modeling and compartmental models,
through quarantine, to study the COVID-19 pandemic in Portugal.
On 2 March 2020, the SARS-CoV-2 virus was confirmed 
to have reached Portugal, when it was reported that two men, 
a 60 year-old doctor, who traveled to the north of Italy on vacation, 
and a 33 year-old man working in Spain, tested positive for COVID-19.
March 12, the Portuguese government declared the highest level of alert 
and said it would be maintained until 9 April, 
because community transmission was detected.
On March 18, 2020, the President of the Portuguese Republic, Marcelo Rebelo de Sousa, 
declared all Portuguese territory in a State of Emergency 
for the following fifteen days, with the possibility of renewal. 
This was the first State of Emergency since the Carnation Revolution in 1974.
By March 24, the Portuguese Government admitted that the country 
could not contain the virus any longer,
and on March 26 the country entered the ``Mitigation Stage''. 
April 2 the Parliament approved the extension of the State of Emergency, 
as requested by the President. This State of Emergency remained until 17 April, 
being then renovated again. The State of Emergency was only canceled May 2, 2020.
Special measures in restricting people movements between municipalities
were also taken for the Easter celebrations, from 9 to 13 April, 
closing all airports to civil transportation and increased control 
in the national borders. A plan to start releasing gradually the country 
from COVID-19 container measures and canceling the State of Emergency
was only started to be implemented by May 4, 2020. 

Our goal here is to develop a mathematical model for COVID-19 able
to describe well the pandemic in Portugal,
since the emergence of the first case, on March 2,
till May 4, 2020, taking into account the real/official data
made available by ``Dire\c{c}\~{a}o Geral de Sa\'{u}de'' 
(DGS, the Portuguese Health Care Authority) \cite{dgs-covid}. 

The paper is organized as follows. In Section~\ref{Sec:model},
we introduce a SAIQH (Susceptible--Asymptomatic--Infectious--Quarantined--Hospitalized) 
mathematical model with the final purpose to investigate COVID-19 dynamics in Portugal.
The model is analyzed in Section~\ref{sec:2}, where we prove
the positivity and boundedness of solutions, we compute the basic reproduction number,
and we investigate the equilibrium points of the model and their stability.
The model is then shown to describe well the transmission dynamics of COVID-19 in Portugal
in Section~\ref{sec:num:simu}, for the mentioned period of 64 days, between 
March 2 and May 4, 2020. We end with Section~\ref{sec:conc} of conclusions and discussion.
 

\section{Model Formulation}
\label{Sec:model}

With the final purpose to investigate COVID-19 dynamics in Portugal, 
we propose a SAIQH (Susceptible--Asymptomatic--Infectious--Quarantined--Hospitalized) 
type model, based on a model analyzed in \cite{Kim} for the respiratory syndrome coronavirus 
transmission dynamics in South Korea. Here, we also consider the $H_{IC}$ class, of 
hospitalized individuals in intensive care, and the compartment $D$ of deaths due to COVID-19.
Precisely, the total living population under study at time $t$, denoted by $N(t)$, 
is divided into six classes at time $t \ge 0$: (i) the susceptible individuals $S(t)$; 
(ii) the infected individuals without (or with mild) symptoms $A(t)$ (the Asymptomatic); 
(iii) infected individuals $I(t)$ with visible symptoms; (iv) quarantined individuals $Q(t)$
in isolation at home; (v) hospitalized individuals $H(t)$; 
(vi) and hospitalized individuals $H_{IC}(t)$ in intensive care units. 
Consequently, we have
$$
N(t) = S(t) + A(t) + I(t) + Q(t) + H(t) + H_{IC}(t)
$$
for all time $t \ge 0$. Moreover, we also consider a seventh class, denoted by $D(t)$, 
that gives the cumulative number of deaths due to COVID-19 for all time $t \ge 0$.
We consider a constant recruitment rate $\Lambda > 0$ into the susceptible class $S$ 
and a constant natural death rate $\mu > 0$ for all time $t \geq 0$ under study.
Susceptible individuals can become infected with COVID-19 at a rate 
$$
\lambda(t) = \dfrac{\beta \big [l_A A(t) + I(t) + l_H H(t)\big ]}{N(t)},
$$ 
where $\beta>0$ is the human-to-human transmission rate per unit of time (day) 
and $l_A>0$ and $l_H>0$ quantify the relative transmissibility of asymptomatic 
individuals and hospitalized individuals, respectively. 
Note that the class $H_{IC}$ is not included in $\lambda(t)$ 
due to the fact that the percentage of health care workers that get infected 
by SARS-CoV-2 in intensive care units is very low and can be neglected.
A fraction $p\in[0,1]$ of the susceptible population is in quarantine at home, 
at rate $\phi>0$. Consequently, only a fraction $1-p\in[0,1]$ of susceptible individuals 
are assumed to be able to become infected. Since there is uncertainty about 
immunity after recovery, we assume that individuals of class $Q$ will 
become susceptible again at rate $\omega>0$. We also consider that only 
a fraction $m\in[0,1]$ of quarantined individuals moves from class $Q$ to $S$.
It means that $(m\times 100)\%$ of quarantined individuals will return to class $S$ 
at the end of $\frac{1}{\omega}$ days. These assumptions are justified by the 
state of calamity that was immediately decreed by the government of Portugal to address 
the COVID-19 outbreak, which was followed by 45 days of a more severe state of emergency. 
The imposed restrictions during the state of calamity and state of emergency were fully 
respected by the Portuguese population. 

After $\frac{1}{\upsilon}$ days of infection, only a fraction $q\in[0,1]$ 
of infected individuals without (or with mild) symptoms will have severe symptoms 
(see \cite{WHO:corona}). Thus, $(q\times100)\%$ of individuals of compartment $A$ 
moves to $I$, at rate $\upsilon>0$. A fraction $f_1\in[0,1]$ of infected individuals 
with severe symptoms is treated at home and the other fraction $(1-f_1)\in[0,1]$ 
is hospitalized, both at rate $\delta_1$. 

Our model consider three scenarios for hospitalized individuals:
\begin{enumerate}[i)]
\item a fraction $f_2\in[0,1]$ of individuals in class $H$ can 
evolve to a state of severe health status, thus needing 
an invasive intervention, such as artificial respiration,
and, consequently, move to intensive care units, at rate $\delta_2>0$;

\item a fraction $f_3\in[0,1]$ of individuals in class $H$ die due to COVID-19, 
the disease-related death rate associated with hospitalized individuals being $\alpha_1>0$;

\item a fraction $(1-f_2-f_3)\in[0,1]$ of individuals in class $H$ gets better and, 
consequently, return to home in quarantine/isolation, at rate $\delta_2$.
\end{enumerate}
For hospitalized individuals in intensive care units, our model considers two possibilities:
\begin{enumerate}[i)]
\item a fraction $(1-\kappa)\in[0,1]$ of individuals in class $H_{IC}$ 
gets better and moves to the class $H$, at rate $\eta>0$;

\item a fraction $\kappa\in[0,1]$ of individuals in class $H_{IC}$ dies 
due to COVID-19, the disease-related death rate associated with hospitalized 
individuals in intensive care being $\alpha_2>0$.
\end{enumerate}
Note that all individuals of classes $S$, $A$, $I$, $Q$, $H$ and $H_{IC}$ 
are subject to other death reasons, at a natural death rate $\mu$.

The previous assumptions are translated into the following
mathematical model:
\begin{align}
\label{modelo-covid19-pt}
\begin{cases}
\dot{S}(t) = \Lambda + \omega mQ(t) 
- \big[ \lambda(t)(1-p) + \phi p  + \mu \big]S(t),\\[0.15cm]
\dot{A}(t) = \lambda(t)(1-p)S(t) - (q\upsilon +\mu) A(t),\\[0.15cm]
\dot{I}(t) = q\upsilon A(t) - (\delta_1 + \mu) I(t),\\[0.15cm]
\dot{Q}(t) = \phi pS(t) + \delta_1f_1I(t) + \delta_2(1-f_2-f_3)H(t) 
- (\omega m + \mu)Q(t),\\[0.15cm]
\dot{H}(t) = \delta_1(1-f_1)I(t) + \eta(1-\kappa)H_{IC}(t) \\
\qquad\qquad - \big[\delta_2(1-f_2-f_3) 
+ \delta_2f_2 + \alpha_1f_3 + \mu\big]H(t),\\[0.15cm]
\dot{H}_{IC}(t) = \delta_2f_2H(t) - \big[\eta(1-\kappa) 
+ \alpha_2\kappa + \mu\big]H_{IC}(t),\\[0.15cm]
\dot{D}(t) = \alpha_1f_3H(t) + \alpha_2\kappa H_{IC}(t),
\end{cases}
\end{align}
which is presented in a schematic way in Figure~\ref{diagrama-covid}.
All parameter and initial conditions of our mathematical 
model~\eqref{modelo-covid19-pt} are described 
in Table~\ref{Tab_parameter_description}.
\begin{table}[!htb]
\centering
\caption[]{Description of the parameters and initial conditions 
of mathematical model \eqref{modelo-covid19-pt}.}
\label{Tab_parameter_description}
\begin{tabular}[center]{|c|l|} \hline
\textbf{Parameter} & \textbf{Description} \\ \hline \hline
\small $\Lambda$ & \small Recruitment rate \\
\small $\mu$ & \small Natural death rate \\
\small $\beta$ & \small Human-to-human transmission rate \\
\small $l_A$ & \small Relative transmissibility of individuals in class $A$ \\
\small $l_H$ & \small Relative transmissibility of individuals in class $H$ \\
\small $\phi$ & \small Rate associated with movement from $S$ to $Q$ \\
\small $\upsilon$ & \small Rate associated with movement from $A$ to $I$ \\
\small $\delta_1$ & \small Rate associated with movement from $I$ to $Q$/$H$ \\
\small $\delta_2$ & \small Rate associated with movement from from $H$ to $Q$/$H_{IC}$ \\
\small $\eta$ & \small Rate associated with movement from from $H_{IC}$ to $H$ \\
\small $\omega$ & \small Rate associated with movement from from $Q$ to $S$ \\
\small $\alpha_1$ & \small Disease-related death rate of class $H$ \\
\small $\alpha_2$ & \small Disease-related death rate of class $H_{IC}$ \\
\small $p$ & \small Fraction of susceptible individuals putted in quarantine \\
\small $q$ & \small Fraction of infected individuals with severe symptoms \\
\small $f_1$ & \small Fraction of infected individuals with severe symptoms in quarantine \\
\small $f_2$ & \small Fraction of hospitalized individuals transferred to $H_{IC}$ \\
\small $f_3$ & \small Fraction of hospitalized individuals who dies from COVID-19 \\
\small $\kappa$ & \small Fraction of hospitalized individuals in intensive care units\\
\ & \small who die from COVID-19 \\
\small $m$ & \small Fraction of individuals who moves from $Q$ to $S$ \\
\small $S(0) = S_0$ & \small Individuals in class $S$ at $t=0$ \\
\small $A(0) = A_0$ & \small Individuals in class $A$ at $t=0$ \\
\small $I(0) = I_0$ & \small Individuals in class $I$ at $t=0$ \\
\small $Q(0) = Q_0$ & \small Individuals in class $Q$ at $t=0$ \\
\small $H(0) = H_0$ & \small Individuals in class $H$ at $t=0$ \\
\small $H_{IC}(0) = H_{IC_0}$ & \small Individuals in class $H_{IC}$ at $t=0$ \\
\small $D(0) = D_0$ & \small Individuals in class $D$ at $t=0$ \\ \hline
\end{tabular}
\end{table}
\begin{figure}[ht!]
\centering
\includegraphics[scale=0.65]{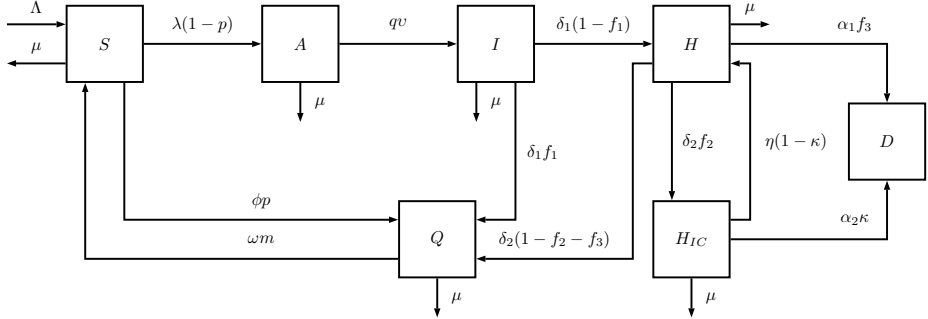}
\caption{Diagram of mathematical model~\eqref{modelo-covid19-pt}.}
\label{diagrama-covid}
\end{figure}


\section{Model Analysis}
\label{sec:2}

System \eqref{modelo-covid19-pt} is equivalent to 
\begin{equation}
\label{modelo-covid19-pt-2}
\begin{cases}
\dot{S}(t) = \Lambda + \omega mQ(t) - \big[ \lambda(t)(1-p) 
+ \phi p  + \mu \big]S(t),\\[0.15cm]
\dot{A}(t) = \lambda(t)(1-p)S(t) - (q\upsilon +\mu) A(t),\\[0.15cm]
\dot{I}(t) = q\upsilon A(t) - (\delta_1 + \mu) I(t),\\[0.15cm]
\dot{Q}(t) = \phi pS(t) + \delta_1f_1I(t) 
+ \delta_2(1-f_2-f_3)H(t) - (\omega m + \mu)Q(t),\\[0.15cm]
\dot{H}(t) = \delta_1(1-f_1)I(t) + \eta(1-\kappa)H_{IC}(t)\\ 
\qquad \qquad - \big[\delta_2(1-f_2-f_3) + \delta_2f_2 + \alpha_1f_3 + \mu\big]H(t),\\[0.15cm]
\dot{H}_{IC}(t) = \delta_2f_2H(t) - \big[\eta(1-\kappa) + \alpha_2\kappa + \mu\big]H_{IC}.
\end{cases}
\end{equation}
Note that at each instant of time $t\geq 0$ 
the cumulative number of deaths 
$D(t)$ due to COVID-19 is given by
$$
{D}(t) = D_0 + \alpha_1 f_3 \int_0^t H(\tau) d\tau
+ \alpha_2\kappa \int_0^t H_{IC}(\tau) d\tau.
$$
Throughout this section we assume $l_A = 1$. 


\subsection{Positivity and boundedness of solutions}

Since system of equations \eqref{modelo-covid19-pt-2} represents human populations, 
all parameters in the model are non-negative. It can be shown that, given non-negative 
initial values, the solutions of the system are non-negative. Precisely,
let us consider the biologically feasible region
\begin{equation*}
\Omega = \left\{ \left( S, A, I, Q, H, H_{IC}\right)
\in \left(\mathbb{R}^+_0\right)^{6} \, : \, N \leq \frac{\Lambda}{\mu} \right\}.
\end{equation*}
In what follows we prove the positive invariance of $\Omega$
(i.e., all solutions in $\Omega$ remain in $\Omega$ for all time).

\begin{lemma}
\label{lemma:posit:invariant}
The region $\Omega$ is positively invariant for the model
\eqref{modelo-covid19-pt-2} with non-negative initial conditions
in $\left(\mathbb{R}^+_0\right)^{6}$.
\end{lemma}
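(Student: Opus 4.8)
The plan is to prove two things about a solution of \eqref{modelo-covid19-pt-2} issuing from an initial point in $\Omega$: (a) every component stays non-negative, and (b) the total living population obeys the a priori bound $N(t)\le\Lambda/\mu$. Since local existence and uniqueness hold (the right-hand side is locally Lipschitz on the region where $N>0$, which is where $\lambda$ is well defined, and $N$ stays positive along trajectories of interest because $\dot S=\Lambda>0$ whenever $S=0$), it suffices to work on the maximal interval of existence $[0,T_{\max})$; the bound in (b) then also forces $T_{\max}=\infty$.

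For the non-negativity I would exploit the quasi-positive structure of the vector field: each equation in \eqref{modelo-covid19-pt-2} can be written as $\dot X_i = P_i(X) - X_i R_i(X)$, where $P_i$ collects the inflow terms and is non-negative on $(\mathbb{R}^+_0)^6$ — here one uses $\Lambda>0$, non-negativity of all rate parameters, and the fact that $1-p$, $1-f_1$, $1-\kappa$ and $1-f_2-f_3$ lie in $[0,1]$ — while $R_i$ is continuous, hence bounded on compact subsets of the orbit. On each coordinate face $\{X_i=0\}$ with the other components non-negative one gets $\dot X_i=P_i(X)\ge 0$, so the flow never points strictly out of the non-negative orthant. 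I would make this rigorous via the variation-of-constants representation
\[
X_i(t) = X_i(0)\exp\!\Big(-\!\int_0^t R_i\Big) + \int_0^t P_i(X(s))\exp\!\Big(-\!\int_s^t R_i\Big)\,ds,
\]
whose right-hand side is a sum of non-negative quantities, combined with a first-exit-time/continuity argument (or, equivalently, by invoking the standard invariance theorem for quasi-positive systems).

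Next, for the bound I would add the first six equations of \eqref{modelo-covid19-pt-2}. A direct check shows that all inter-compartmental transfer terms cancel in pairs, leaving
\[
\dot N(t) = \Lambda - \mu N(t) - \alpha_1 f_3 H(t) - \alpha_2\kappa H_{IC}(t).
\]
By part (a), $H,H_{IC}\ge 0$, and since $\alpha_1,\alpha_2,f_3,\kappa\ge 0$ this gives the differential inequality $\dot N(t)\le \Lambda-\mu N(t)$. A comparison (Gronwall) argument then yields
\[
N(t) \le \frac{\Lambda}{\mu} + \Big(N(0)-\frac{\Lambda}{\mu}\Big)e^{-\mu t},
\]
so $N(0)\le \Lambda/\mu$ implies $N(t)\le\Lambda/\mu$ for all $t\ge 0$ (and $\limsup_{t\to\infty}N(t)\le\Lambda/\mu$ in general). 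Combining (a) and (b), a solution starting in $\Omega$ keeps non-negative components and satisfies $N(t)\le\Lambda/\mu$, hence remains in $\Omega$; that is, $\Omega$ is positively invariant.

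I expect the main obstacle to be purely in the non-negativity step — namely turning the "the flow points inward on each boundary face" heuristic into a fully rigorous statement, in particular handling the borderline case of a first hitting time $t_1$ at which both $X_i(t_1)=0$ and $P_i(X(t_1))=0$; the non-negative integral representation above is precisely what disposes of this degeneracy. The derivation of $\dot N$ and the subsequent comparison estimate are routine once the cancellations are tracked carefully.
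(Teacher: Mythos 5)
Your proof is correct and, for the boundedness part, follows exactly the paper's route: summing the six equations of \eqref{modelo-covid19-pt-2} to obtain $\dot N = \Lambda - \mu N - \alpha_1 f_3 H - \alpha_2 \kappa H_{IC}$ and then applying the standard comparison theorem to conclude $N(t)\le \Lambda/\mu$ whenever $N(0)\le \Lambda/\mu$. The paper only \emph{asserts} the non-negativity of solutions before the lemma (``it can be shown that\dots''), so your quasi-positivity and variation-of-constants argument fills in a detail the paper omits rather than constituting a genuinely different approach.
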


\begin{proof}
The rate of change of the total population, obtained by adding 
all the equations in model \eqref{modelo-covid19-pt-2}, is given by
\begin{equation*}
\frac{dN}{dt} = \Lambda - \mu N(t) - \alpha_1 f_3 H(t) - \alpha_2 k H_{IC}.
\end{equation*}
Using the standard comparison theorem \cite{Lakshmikantham,Silva:2015} 
one can easily show that
$$
N(t) \leq N(0) e^{-\mu t} + \frac{\Lambda}{\mu} \left( 1- e^{-\mu t} \right).
$$
In particular, $N(t) \leq \frac{\Lambda}{\mu}$ if $N(0) \leq \frac{\Lambda}{\mu}$.
Thus, the region $\Omega$ is positively invariant and it is sufficient to consider
the dynamics of the flow generated by \eqref{modelo-covid19-pt-2} in $\Omega$.
In this region, the model is epidemiologically and mathematically well posed 
in the sense of \cite{Hethcote} and every solution of 
\eqref{modelo-covid19-pt-2} with initial conditions 
in $\Omega$ remains in $\Omega$ for all $t > 0$.
\end{proof}


\subsection{Equilibrium points and stability analysis}

With the purpose to simplify expressions, let us introduce the
following notation:
\begin{enumerate}[i)]
\item $a_0 :=q v+ \mu$;
\item $a_1:=\delta_1+\mu$;
\item $a_2:= m \omega + \mu$;
\item $a_3 := \delta_2 (1-f_2-f_3)+\delta_2 f_2+\alpha_1 f_3+\mu$;
\item $a_4 := \delta_2 (1-f_2-f_3)$;
\item $a_5 := p \phi+\mu$;
\item $a_6 := \delta_1 (1-f_1)$;
\item $a_7 := \alpha_2 k+ \eta_k+ \mu$;
\item $\eta_k := \eta (1-k)$;
\item $\chi :=  a_3 \, a_7 -\delta_2\, \eta_k \, f_2 $. 
\end{enumerate}
Model \eqref{modelo-covid19-pt-2} has a disease free equilibrium, $\Sigma_0$, given by
\begin{equation}
\label{eq:dfe}
\Sigma_0 = \left(S_0, A_0, I_0, Q_0, H_0, H_{IC_0} \right)\\ 
= \left( \frac{ \Lambda a_2 }{ (p \phi+a_2) \mu }, 0, 0, 
\frac{p \phi \Lambda }{ (p \phi+a_2)\mu }, 0, 0\right) \, . 
\end{equation}
The local stability of $\Sigma_0$ can be established using the
next-generation operator method of \cite{Driessche}. Moreover,
following the approach in \cite{Driessche}, 
the basic reproduction number is given by 
\begin{equation}
\label{eq:R0}
R_0 = \frac{ \beta a_2 \left( 1-p \right) \big[ (l_H \, a_6 \, q v  
+ (a_1  + qv)a_3 ) a_7 - \delta_2\, \eta_k \, f_2 (qv  + a_1)\big] }{ a_0 
\, a_1 \, \chi  \left( p \phi+ a_2 \right) }  = \frac{\mathcal{N}}{\mathcal{D}}.
\end{equation}

\begin{lemma}
The disease free equilibrium $\Sigma_0$ \eqref{eq:dfe}
is locally asymptotically stable
if $R_0 < 1$ and unstable if $R_0 > 1$,
where $R_0$ is the basic reproduction number \eqref{eq:R0}.
\end{lemma}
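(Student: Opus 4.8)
The plan is to apply the next-generation operator method of van den Driessche and Watmough \cite{Driessche}, exactly the tool referenced in the statement. First I would order the infected compartments as $(A,I,H,H_{IC})$ and split the right-hand sides of the corresponding equations of \eqref{modelo-covid19-pt-2} as $\dot{x}=\mathcal{F}(x)-\mathcal{V}(x)$, where $\mathcal{F}$ collects only the genuinely new infections --- here the single term $\lambda(t)(1-p)S(t)$ entering the $A$ equation --- and $\mathcal{V}$ gathers all remaining transfer terms (progression $A\to I\to H\rightleftarrows H_{IC}$, recovery back to $Q$, and removal by natural or disease-induced death). I would then check the hypotheses (A1)--(A5) of \cite{Driessche}: all these terms are nonnegative on the relevant faces, the disease-free subspace $\{A=I=H=H_{IC}=0\}$ is invariant, $\mathcal{F}$ vanishes there, and $\Sigma_0$ from \eqref{eq:dfe} is a locally asymptotically stable equilibrium of the infection-free subsystem $\dot{S}=\Lambda+\omega mQ-a_5 S$, $\dot{Q}=\phi pS-a_2 Q$ (its Jacobian has negative trace $-(a_5+a_2)$ and positive determinant $a_5a_2-m\omega p\phi=\mu(p\phi+m\omega+\mu)$). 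These verifications are routine for a model of this compartmental type.

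Next I would linearize at $\Sigma_0$ and form $F=\big[\partial\mathcal{F}_i/\partial x_j(\Sigma_0)\big]$ and $V=\big[\partial\mathcal{V}_i/\partial x_j(\Sigma_0)\big]$. Using the abbreviations $a_0,\dots,a_7,\eta_k,\chi$, the matrix $V$ is lower block-triangular,
\[
V=\begin{pmatrix}
a_0 & 0 & 0 & 0\\
-q\upsilon & a_1 & 0 & 0\\
0 & -a_6 & a_3 & -\eta_k\\
0 & 0 & -\delta_2 f_2 & a_7
\end{pmatrix},
\]
a nonsingular M-matrix with $\det V=a_0a_1\chi$ (note $\chi=a_3a_7-\delta_2\eta_k f_2$ is precisely the determinant of the lower-right $2\times2$ block), while $F$ has a single nonzero row $\beta(1-p)\tfrac{S_0}{N_0}(l_A,\,1,\,l_H,\,0)$ with $S_0/N_0=a_2/(p\phi+a_2)$ and $l_A=1$. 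Since $F$ has rank one, $\rho(FV^{-1})$ equals $\operatorname{tr}(FV^{-1})$, i.e. the $(1,1)$ entry of $FV^{-1}$, so it suffices to compute the first column $y$ of $V^{-1}$ by back-substitution in $Vy=e_1$: this gives $y_1=1/a_0$, $y_2=q\upsilon/(a_0a_1)$, $y_3=a_6q\upsilon a_7/(a_0a_1\chi)$, whence $R_0=\rho(FV^{-1})=\beta(1-p)\tfrac{a_2}{p\phi+a_2}\big(y_1+y_2+l_Hy_3\big)$; placing this over the common denominator $a_0a_1\chi$ and using $\chi=a_3a_7-\delta_2\eta_k f_2$ reproduces formula \eqref{eq:R0} exactly.

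Finally I would invoke Theorem~2 of \cite{Driessche}: under (A1)--(A5), $\Sigma_0$ is locally asymptotically stable when $R_0=\rho(FV^{-1})<1$ and unstable when $R_0>1$. Equivalently, the Jacobian of \eqref{modelo-covid19-pt-2} at $\Sigma_0$ is block-triangular, the $(S,Q)$-block contributing only eigenvalues with negative real part (as above), while the infected block is $F-V$, whose stability modulus is negative precisely when $\rho(FV^{-1})<1$ because $V$ is a nonsingular M-matrix and $F\ge 0$. The only real obstacle is bookkeeping: choosing the "correct" split into $\mathcal{F}$ and $\mathcal{V}$ so that $\mathcal{F}$ captures new infections only, and then carefully carrying out the $V^{-1}$ computation and the algebraic simplification so it matches \eqref{eq:R0}; the spectral-radius step itself is immediate here thanks to $F$ being rank one.
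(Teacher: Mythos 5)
Your proof is correct, and it is worth noting that it is actually \emph{more} complete than the one in the paper. The paper also cites Theorem~2 of van den Driessche and Watmough, but what it actually does is write down the full $6\times 6$ Jacobian $M(\Sigma_0)$ and verify that its trace is negative and its determinant equals $(\mathcal{D}-\mathcal{N})\mu>0$ when $R_0<1$; for a matrix of size larger than $2\times 2$ these two conditions alone do not force all eigenvalues to have negative real part, so the paper's displayed computation is best read as a consistency check (the determinant condition does correctly locate the threshold $R_0=1$, and it yields the instability claim for $R_0>1$ since then $\det M(\Sigma_0)<0$ forces a positive real eigenvalue), with the stability conclusion really resting on the cited theorem. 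You instead carry out the next-generation machinery in full: the split $\dot{x}=\mathcal{F}-\mathcal{V}$ with new infections only in the $A$-equation, verification of (A1)--(A5) including the $2\times 2$ disease-free subsystem in $(S,Q)$ (your determinant $\mu(p\phi+m\omega+\mu)$ is right), the lower block-triangular $V$ with $\det V=a_0a_1\chi$, the rank-one trick $\rho(FV^{-1})=\operatorname{tr}(FV^{-1})$, and the back-substitution giving $y_1=1/a_0$, $y_2=q\upsilon/(a_0a_1)$, $y_3=a_6q\upsilon a_7/(a_0a_1\chi)$. Your resulting expression $\beta(1-p)\tfrac{a_2}{p\phi+a_2}\bigl(y_1+y_2+l_Hy_3\bigr)$ does reduce to \eqref{eq:R0}, since the bracket in the paper's numerator equals $(a_1+q\upsilon)\chi+l_H a_6 q\upsilon a_7$. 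In short: same theorem invoked, but your route supplies the hypotheses and the spectral-radius computation that the paper leaves implicit, at the cost of not exhibiting the explicit Jacobian trace and determinant identities the paper records.
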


\begin{proof}
Following Theorem~2 of \cite{Driessche}, the disease-free equilibrium (DFE),
$\Sigma_0$, is locally asymptotically stable if all the eigenvalues of the Jacobian
matrix of system \eqref{modelo-covid19-pt-2}, here denoted by $M\left(\Sigma_0\right)$,
computed at the DFE $\Sigma_0$, have negative real parts.
The Jacobian matrix $M\left( \Sigma_0 \right)$ of system \eqref{modelo-covid19-pt-2}
at disease free equilibrium $\Sigma_0$ is given by
\begin{equation} 
\label{eq:Jacob:DFE}
\left[ \begin {array}{cccccc} - a_5 & \dfrac{ a_2 \,\beta\,
\left(p -1 \right) }{\phi\,p+ a_2}&\dfrac{ a_2 \,\beta\,
\left(p -1 \right) }{\phi\,p+ a_2}&\dfrac{ l_H \, a_2 \,
\beta \, \left(p -1\right)}{\phi\,p + a_2}& m \omega & 0\\
\noalign{\medskip} 0&-\dfrac{ a_0 \,p\phi+ a_2( \beta\,p+ a_0 
- \beta)}{\phi\,p+a_2}&-\dfrac{ a_2\,\beta\, \left(p -1\right) }{\phi\,p + a_2 }
&-\dfrac{ l_H \, a_2\,\beta\, \left(p -1 \right) }{\phi\,p+ a_2} & 0 & 0 \\ 
\noalign{\medskip} 0& q v &- a_1 & 0 & 0 & 0\\ \noalign{\medskip}
0 & 0 & a_6 &- a_3 & 0& \eta_k \\ \noalign{\medskip} \phi\, p & 0& \delta_1
\, f_1 & a_4 &- a_2 & 0\\ \noalign{\medskip} 0 & 0 & 0 &\delta_2\ f_2 & 0 &- a_7 
\end {array} \right] \, .
\end{equation}
One has
\begin{equation*}
\textrm{trace}\left[ M\left(\Sigma_0\right) \right] 
= - a_5 -\frac{ a_0 \,p\phi+ a_2 \,\beta ( p - 1) 
+ a_0 \,a_2}{\phi\,p+ a_2 }
- a_1 - a_3 - a_2 - a_7   < 0 
\end{equation*}
and
\begin{equation*}
\det \left[ M\left(\Sigma_0\right) \right] 
=(\mathcal{D}-\mathcal{N})\mu  > 0
\end{equation*}
for $R_0  < 1$, where
$\mathcal{D}$ and $\mathcal{N}$ denote the denominator and numerator of $R_0$, respectively. 
We have just proved that the disease free equilibrium $\Sigma_0$ of model
\eqref{modelo-covid19-pt-2} is locally asymptotically stable if $R_0 < 1$
and unstable if $R_0 > 1$.
\end{proof}

Assume that the transmission rate is strictly positive, that is, 
\begin{equation}
\label{eq:lambdat}
\lambda^* =\dfrac{\beta (A^* + I^* + l_H H^*)(1-p) }{N^*} > 0
\end{equation} 
and $\chi > 0$. Then, model \eqref{modelo-covid19-pt-2} 
has an endemic equilibrium 
\begin{equation*}
\Sigma^* = \left(S^*, A^*, I^*, Q^*, H^*, H_{IC}^* \right), 
\end{equation*}
where
\begin{equation}
\label{eq:EE}
\begin{split}
S^*&= \dfrac{\lambda^* \, a_0 \, a_1 \, a_2 \, \chi }{ \mathcal{D}^*},\\[0.15cm]
A^*&= \dfrac{ a_1 \, a_2 \, \chi \, \Lambda \, \lambda^*}{ \mathcal{D}^* },\\[0.15cm]
I^*&= \dfrac{ \chi\, \Lambda\, a_2 \, q v \, \lambda^* }{  \mathcal{D}^* },\\[0.15cm]
Q^*&=\dfrac{ \Lambda ((\chi \delta_1 f_1 + a_4 a_6 a_7) 
q v  \lambda^*+ a_0 a_1 p \phi \chi) }{ \mathcal{D}^* },\\[0.15cm]
H^*&= \dfrac{\Lambda \, a_2\,  a_6 \, a_7 \, q v \, \lambda^*  }{ \mathcal{D}^* },\\[0.15cm]
H_{IC}^*&= \dfrac{ \delta_2\, f_2 \, \Lambda \, a_2 \, a_6 \, qv \, \lambda^* }{  \mathcal{D}^* } \, , 
\end{split}
\end{equation}
where $\mathcal{D}^* = \left(  \chi  \left( - f_1 \, m \omega \, 
q v \,\delta_1 + a_0 \, a_1 \, a_2 \right) - a_4 \, a_6 \, a_7  
m \omega  q v \right) \lambda^* + \mathcal{D} \mu$. 

Using \eqref{eq:EE} in the expression for $\lambda^*$ in \eqref{eq:lambdat} 
shows that the nonzero (endemic) equilibria of the model satisfy
\begin{equation*}
\lambda^* = \dfrac{\left(\mathcal{N}
-\mathcal{D}\right) \beta (1-p)}{\mathcal{N} + q v (a_2 a_6 (a_7 (1-l_H)
+\delta_2 f_2)+\delta_1 f_1 \chi + a_4 a_6 a_7)) \beta (1-p)} \, . 
\end{equation*}
The force of infection at the steady-state $\lambda^*$ is positive, 
only if $R_0 > 1$. We have just proved the following result.

\begin{lemma}
The model \eqref{modelo-covid19-pt-2} has a unique endemic equilibrium whenever
$R_0 > 1$.
\end{lemma}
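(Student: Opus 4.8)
\noindent\emph{Proof proposal.}
The plan is to compute all equilibria of \eqref{modelo-covid19-pt-2} directly, carrying the steady-state force of infection $\lambda^*$ of \eqref{eq:lambdat} as a free parameter until the end. Setting every time derivative to zero and working from the bottom of the system upward, the equation $\dot H_{IC}=0$ gives $H_{IC}^*=\delta_2 f_2 H^*/a_7$, a positive multiple of $H^*$; substituting into $\dot H=0$ yields $a_6 I^* = \chi H^*/a_7$, so that $H^* = a_6 a_7 I^*/\chi$ — and here the standing hypothesis $\chi = a_3 a_7 - \delta_2 \eta_k f_2 > 0$ is exactly what makes this elimination legitimate and keeps $H^*$ of the sign of $I^*$. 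Next $\dot I=0$ gives $I^*=q v\,A^*/a_1$ and $\dot A=0$ gives $A^* = \lambda^*(1-p)S^*/a_0$; feeding the whole chain back into $\dot Q=0$ and then $\dot S=0$ closes the loop and expresses $S^*$, and hence every coordinate, as an explicit rational function of $\lambda^*$ with common denominator $\mathcal{D}^*$. These are precisely the formulas \eqref{eq:EE}, and at $\lambda^*=0$ they collapse to the disease-free equilibrium $\Sigma_0$ of \eqref{eq:dfe}.

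The second step is to impose self-consistency: the coordinates \eqref{eq:EE} must reproduce the force of infection, i.e.\ $\lambda^* N^* = \beta(1-p)(A^*+I^*+l_H H^*)$ with $N^* = S^*+A^*+I^*+Q^*+H^*+H_{IC}^*$. Since each of $A^*,I^*,H^*$ carries an explicit factor $\lambda^*$ while $N^*$ is affine in $\lambda^*$, multiplying through by $\mathcal{D}^*$ and cancelling the root $\lambda^*=0$ (the DFE branch) leaves a single \emph{linear} equation in $\lambda^*$, whose solution is the closed form displayed just before the statement,
\begin{equation*}
\lambda^* = \dfrac{\left(\mathcal{N}-\mathcal{D}\right)\beta(1-p)}{\mathcal{N} + q v\left(a_2 a_6\left(a_7(1-l_H)+\delta_2 f_2\right)+\delta_1 f_1 \chi + a_4 a_6 a_7\right)\beta(1-p)}\,.
\end{equation*}
Being linear, the equation has \emph{at most one} root, so \eqref{modelo-covid19-pt-2} has at most one endemic equilibrium; existence then reduces to showing this root is admissible, i.e.\ strictly positive and yielding strictly positive state components via \eqref{eq:EE}.

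For the sign analysis I would argue that the denominator above is strictly positive — it equals $\mathcal{N}>0$ plus $\beta(1-p)$ times a sum of terms assembled from non-negative parameters, using $l_H\le 1$ for the term $a_7(1-l_H)$ and $\chi>0$ throughout. Hence $\lambda^*>0$ if and only if the numerator is positive, i.e.\ $\mathcal{N}>\mathcal{D}$, which — since $\mathcal{D}>0$ — is equivalent to $R_0 = \mathcal{N}/\mathcal{D} > 1$, with $R_0$ the basic reproduction number \eqref{eq:R0}. Plugging this positive $\lambda^*$ back into \eqref{eq:EE} produces a genuine point of $\Omega$ with all coordinates positive, i.e.\ an endemic equilibrium, and linearity makes it unique. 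This establishes the claim.

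The main obstacle I expect is the bookkeeping behind the positivity of the various denominators. One must check that $\mathcal{D}$ (and $\mathcal{D}^*$ evaluated at the admissible $\lambda^*$) is strictly positive, so that $R_0$ is well defined with a positive denominator and \eqref{eq:EE} defines a bona fide equilibrium in $\Omega$. Establishing $\mathcal{D}^*>0$ is the delicate point, since it carries the negative contributions $-\chi f_1 m\omega q v \delta_1$ and $-a_4 a_6 a_7 m\omega q v$; the clean route is to observe $a_0 a_1 a_2 > f_1 q v \delta_1 m\omega$ (from $a_0>qv$, $a_1>\delta_1$, $a_2>m\omega$ and $f_1\le 1$) together with the positive invariance of $\Omega$ from Lemma~\ref{lemma:posit:invariant} — a bounded positive equilibrium must have $N^*>0$ — to exclude a vanishing or sign-changing denominator. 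Apart from this, the argument is linear algebra plus a single scalar equation, so uniqueness comes essentially for free once the construction is done.
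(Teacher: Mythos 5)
Your proposal follows the same route as the paper: solve the equilibrium equations in terms of the steady-state force of infection $\lambda^*$ to obtain \eqref{eq:EE}, substitute back into \eqref{eq:lambdat} to reduce everything to a single scalar (linear) equation whose unique root is positive exactly when $R_0>1$. It is correct and in fact more complete than the original — the explicit linearity argument for uniqueness, the role of $\chi>0$, and the positivity checks on $\mathcal{D}$, $\mathcal{D}^*$ and the denominator of $\lambda^*$ (where $l_H\le 1$ is used) are all left tacit in the paper.
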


\begin{remark}
The dynamical behavior of model \eqref{modelo-covid19-pt-2},  
when $R_0 = 1$, can be studied following the approach in 
\cite{Castillo-Chavez:MBE:2004}: see, e.g., \cite{AnaPaiao:JOTA}.
\end{remark}


\section{COVID-19 Spread in Portugal}
\label{sec:num:simu}

In this section, we show that model \eqref{modelo-covid19-pt-2} describes well
the transmission dynamics of COVID-19 in Portugal, taking into account the official
data from daily reports of ``Dire\c{c}\~{a}o Geral de Sa\'{u}de'' 
(DGS, the Portuguese Health Care Authority), available in \cite{dgs-covid}. 
We consider the COVID-19 spread in Portugal 
from the 2nd of March until 4th May 2020. Therefore,
the initial time $t = 0$ corresponds to March 2, 2020.

\begin{remark}
The data considered in this paper concerns the period from the first confirmed 
case with COVID-19 in Portugal and the period of confinement corresponding 
to the three states of emergency that were declared between March 18, 2020 
and May 2, 2020. In practice, the emergency period hold until May 4, 2020 
due to governmental measures taken during the extended weekend of May 2 and 3, 2020,
after the international workers' day. In this Section~\ref{sec:num:simu}, 
we show that the basic reproduction number \eqref{eq:R0} for the parameter values 
used (cf. Table~\ref{Tab_parameter_value}), takes the value $R_0^* = 0.95$. 
This value, close to one, indicates the sensible epidemic situation in which Portugal 
is at the end of three emergency states.
\end{remark}


\subsection{Initial conditions}
\label{subsec:initial:values}

Following the official information from \cite{dgs-covid}, on 2nd March 2020 
the two first infections by the SARS-CoV-2 virus, with symptoms, 
were confirmed in Portugal. Consequently, $I_0 = 2$.

According to \cite{WHO:corona}, only a fraction of infected individuals develops symptoms. 
Therefore, we assume that $I_0 = q \times A_0$ (see mathematical model \eqref{modelo-covid19-pt}). 
Moreover, following \cite{RTP}, we consider the value $q=0.15$, and we obtain 
$$ 
A_0 = \frac{I_0}{0.15} \simeq 13.
$$
Following \cite{dgs-covid}, we consider 
$$
H_0 = H_{IC_0} = D_0 = 0.
$$
As until 2nd March 2020 quarantine had not been advised/imposed in Portugal, 
we take $Q_0=0$.

Some of the information used in this paper is taken from 
the Portuguese database \text{PORDATA} of 2018, being 2018 the year 
with more recent available information \cite{pordata}. 
Based on \cite{pordata}, the total population in 2018 was equal 
to $10\ 283\ 800$, therefore we assume $N(0) = N_0 = 10\ 283\ 800$.

It follows that 
$$
S_0 = N_0 - A_0 - I_0 - Q_0 - H_0 - H_{IC_0} = 10\ 283\ 785.
$$


\subsection{Parameter values}
\label{subsec:par:values}

Following the information of \cite{pordata}, there were, in 2018, 
$87\ 020$ newborns in Portugal. Moreover, we assume 26\ 050 
foreign entrances in Portugal. We can conclude that there were, 
on average, $\frac{87\ 020 + 26\ 050}{365}$ newborns per day in 2018.
In agreement, we assume that 
$$
\Lambda = \frac{87\ 020 + 26\ 050}{365}.
$$

Furthermore, there were $113\ 051$ deaths in 2018, in Portugal.
Then, there were, on average, $\frac{113\ 051}{365}$ deaths 
per day in 2018 and, consequently, we take
$$
\mu = \frac{113\ 051}{365\times N_0}.
$$
On 12th March 2020, Portuguese Government suspended classes 
from 16th March 2020 (see \cite{gov-pt}). Actually, the Portuguese 
were advised to stay at home, avoiding social contacts, since 14th March 2020, 
inclusive, restricting to the maximum their exits from home.
As the two first infected cases in Portugal happened on 2nd March 2010, 
we consider that the quarantine was adopted in education after twelve days after 
the beginning of the pandemic in Portugal, that is, $\phi = \frac{1}{12}$. 
Using the Portuguese database \text{PORDATA} of 2018 \cite{pordata}, 
because there is no more recent official information,  we assume that there are:
\begin{enumerate}[i)]
\item 240\ 231 children in nurseries (see \cite{pordata-infantarios});
\item 987\ 704 students in basic education (see \cite{pordata});
\item 401\ 050 students in high education (see \cite{pordata});
\item 372\ 753 students in universities (see \cite{pordata}).
\end{enumerate}
So, 2 001 738 students were in quarantine at home, since the end of 13th March 2020.
Moreover, in 2018, there were 2\ 228\ 750 individuals with age greater or equal 
to 65 years (see \cite{pordata}). Population with these ages were also advised 
to be at home, in quarantine. We also consider that unemployed individuals 
(7\% of the total population) have decided to be in quarantine after 13th March 2020. 
As the Portuguese population in 2018 was equal to $10\ 283\ 800$, we assume that 
there are 719 866 unemployed individuals in quarantine (see \cite{pordata}).
Concluding, we assume that after twelve days of the beginning 
(12 days after after March 2, 2020), 
$$
2\ 001\ 738 + 2\ 228\ 750 + 719\ 866 = 4\ 950\ 354
$$ 
individuals began their quarantine. With students at home, some parents 
decided to work from home. Actually, two million of Portuguese were 
in this situation (see \cite{jornal_negocios}). So, we assume that the 
fraction of susceptible people in quarantine after twelve days of the beginning is
$$
p = \dfrac{4\ 950\ 354 + 2\ 000\ 000}{N_0} \simeq 0.68.
$$
Through DGS, we can consider that $q=0.15$ (see \cite{RTP}).
According to World Health Organization (WHO), $\frac{1}{\upsilon}\in [1,14]$, 
but $\frac{1}{\upsilon}=5$ on average (see \cite{WHO:corona}).

The fraction of hospitalized individuals $H$ that are transferred to the class 
$H_{IC}$, given by the parameter $f_2$, is computed taking the average value 
of the percentages of individuals in intensive care $H_{IC}$ with respect 
to the ones that are hospitalized $H$, which is equal to approximately $f_2 = 0.21$.  

The fractions of hospitalized individuals, $H$, and hospitalized individuals
in intensive care units, $H_{IC}$, who die from COVID-19, $f_3$ and $k$, 
respectively, are assumed to take the same values, because there is no available 
information about these values separately, and they are computed taking into account 
the last 42 days, according to the data in \cite{dgs-covid}, 
taking approximately the value $f_3 = k = 0.03$. 

The assumed value for the transfer rate from quarantine $Q$ to susceptible $S$, 
$\omega =  1/31$ ($day^{-1}$), makes sense in the context of the 45 days of
state of emergency and home containment obligation imposed by the Portuguese authorities.

\begin{table}[!htb]
\centering
\caption[]{Parameter values and initial conditions for model \eqref{modelo-covid19-pt}.}
\label{Tab_parameter_value}
\begin{tabular}[center]{|c|c|c|} \hline
\textbf{Parameter} & \textbf{Value} & \textbf{Reference}\\ \hline \hline
\small $\Lambda$ & \small $(87\ 020 + 26\ 050)/365$ (person day$^{-1}$)  & \small \cite{pordata}\\
\small $\mu$ & \small $113\ 051/(365\times N_0)$ (day$^{-1}$)  & \small \cite{pordata}\\
\small $\beta$ & \small 1.93 (day$^{-1}$)  & \small Assumed \\
\small $l_A$ & \small 1 (dimensionless) & \small Assumed \\
\small $l_H$ & \small 0.1 (dimensionless) & \small Assumed \\
\small $\phi$ & \small 1/12 (day$^{-1}$) & \small \cite{gov-pt}\\
\small $\upsilon$ & \small 1/5 (day$^{-1}$) & \small \cite{WHO:corona}\\
\small $\delta_1$ & \small 1/3 (day$^{-1}$) & \small Assumed \\
\small $\delta_2$ & \small 1/3 (day$^{-1}$) & \small Assumed \\
\small $\eta$ & \small 1/7 (day$^{-1}$) & \small Assumed \\
\small $\omega$ & \small 1/31 (day$^{-1}$) & \small Assumed \\
\small $\alpha_1$ & \small 1/7 (day$^{-1}$) & \small Assumed \\
\small $\alpha_2$ & \small 1/15 (day$^{-1}$)  & \small Assumed \\
\small $p$ & \small $0.68$ & \small \cite{jornal_negocios,pordata,pordata-infantarios} \\
\small $q$ & \small 0.15 & \small \cite{RTP}\\
\small $f_1$ & \small 0.96 & \small \cite{dgs-covid} \\
\small $f_2$ & \small 0.21  & \small \cite{dgs-covid} \\
\small $f_3$ & \small 0.03 & \small \cite{dgs-covid} \\
\small $\kappa$ & \small 0.03 & \small Assumed\\
\small $m$ & \small 0.075 & \small Assumed\\
\small $S_0$ & \small 10\ 283\ 785 (person) & \small \cite{dgs-covid,pordata,RTP,WHO:corona} \\
\small $A_0$ & \small 13 (person) & \small \cite{dgs-covid,RTP,WHO:corona} \\
\small $I_0$ & \small 2 (person) & \small \cite{dgs-covid} \\
\small $Q_0$ & \small 0 (person) & \small Assumed \\
\small $H_0$ & \small 0 (person) & \small \cite{dgs-covid} \\
\small $H_{IC_0}$ & \small 0 (person) & \small \cite{dgs-covid} \\
\small $D_0$ & \small 0 (person) & \small \cite{dgs-covid} \\ \hline
\end{tabular}
\end{table}


\subsection{Numerical simulations}
\label{subsec:num:sim}

Now we fit the real data from the daily Portuguese reports from \cite{dgs-covid}, 
of the values of infected $I$ and hospitalized $H$ individuals. We provide numerical 
simulations with respect to classes $I$ and $H$, using all the values 
of Table~\ref{Tab_parameter_value}.

In Figure~\ref{fig:I}, we find the predicted number of infected individuals $I$ 
in Portugal for $t\in[0,64]$, through the mathematical model \eqref{modelo-covid19-pt}, 
versus real data from 2nd March 2020 to 4th may 2020, available in \cite{dgs-covid}.
It is important to note that $I(t)$ does not represent the cumulative number 
of confirmed cases at each day $t$, but the number of infected individuals 
with symptoms at each day $t$.
\begin{figure}[ht!]
\centering
\includegraphics[scale=0.45]{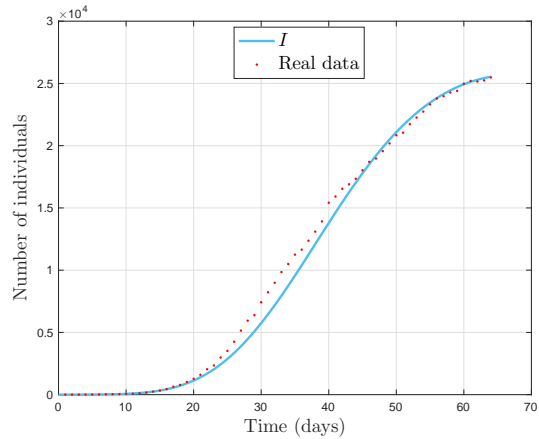}
\caption{Number of COVID-19 infected individuals $I(t)$, 
$t\in[0,64]$ days, predicted by model~\eqref{modelo-covid19-pt}, 
assuming all values of Table~\ref{Tab_parameter_value} -- solid green line;
real data of Portugal, from 2nd March 2020 to 4th May 2020, 
from \cite{dgs-covid} -- red points.}
\label{fig:I}
\end{figure}

In Figure~\ref{fig:H}, we find the predicted number of hospitalized individuals 
in Portugal for $t\in[0,64]$, through the mathematical model \eqref{modelo-covid19-pt}, 
versus real data from 2nd March 2020 to 4th May 2020, available in \cite{dgs-covid}.
However, the plotted curve $H$, predicted by the proposed mathematical 
model \eqref{modelo-covid19-pt} as given in Figure~\ref{fig:H},
has a shift of 13 days with respect to reported values 
of hospitalized individuals in Portugal.
\begin{figure}[ht!]
\centering
\includegraphics[scale=0.45]{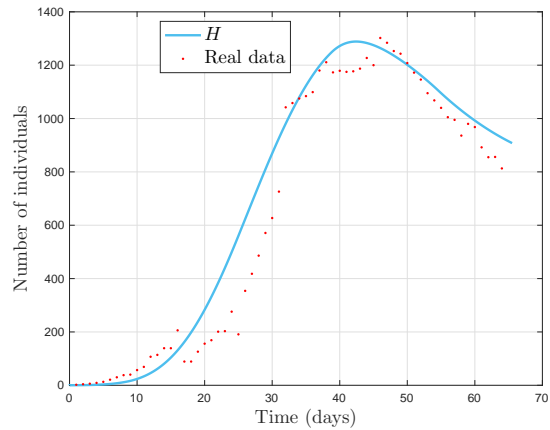}
\caption{Number of hospitalized individuals $H(t)$, $t\in[0,64]$ days, 
predicted by model~\eqref{modelo-covid19-pt}, assuming all values 
of Table~\ref{Tab_parameter_value} -- solid green line;
real data of Portugal, from 2nd March 2020 to 4th May 2020, 
from \cite{dgs-covid} -- red points.}
\label{fig:H}
\end{figure}

The basic reproduction number $R_0$, given by \eqref{eq:R0}, 
takes the value $R_0^* \simeq 0.95$ for the parameter values 
of Table~\ref{Tab_parameter_value}. The points $(\beta = 1.93, R_0^*)$, 
$(p = 0.68, R_0^*)$, $(\omega = 1/31, R_0^*)$, and $(m=0.075, R_0^*)$, 
are marked in Figures~\ref{fig:R0_beta}--\ref{fig:R0_m} with the mark $*$. 
\begin{figure}[ht!]
\centering
\includegraphics[scale=0.45]{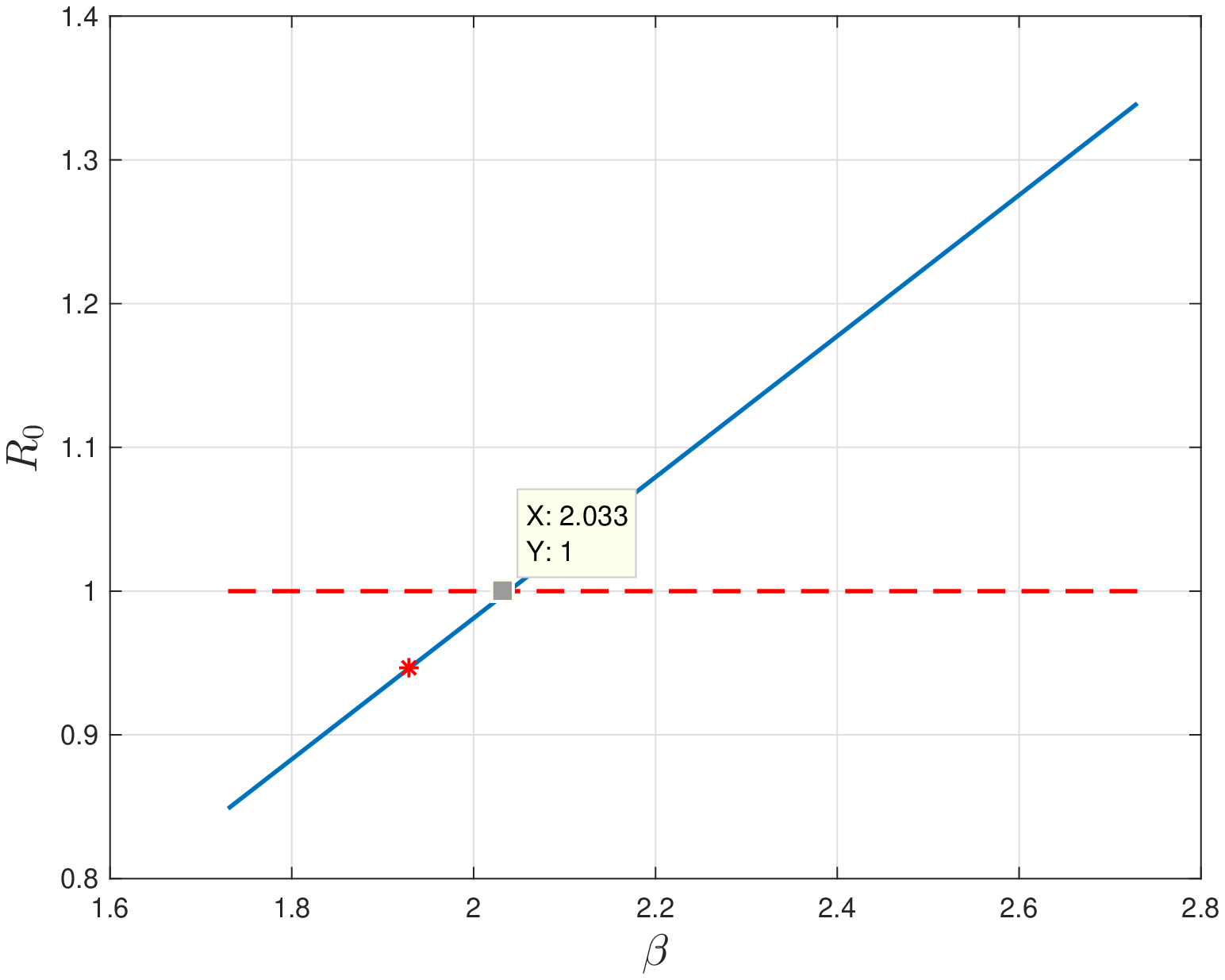}
\caption{Influence of the parameter $\beta$ in the value of $R_0(\beta)$, 
for $\beta \in [\beta^*-0.2, \beta^*+ 0.8]$, where $\beta^*$ 
and all the other parameters take the values of Table~\ref{Tab_parameter_value}.}
\label{fig:R0_beta}
\end{figure}
\begin{figure}[ht!]
\centering
\includegraphics[scale=0.45]{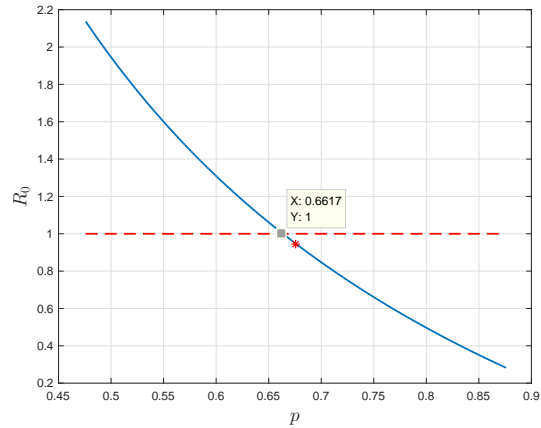}
\caption{Influence of the parameter $p$ in the value of $R_0(p)$, 
for $p \in [p^*-0.2, p^*+0.2]$, where $p^*$ and all the other parameters 
take the values of Table~\ref{Tab_parameter_value}.}
\label{fig:R0_p}
\end{figure} 
\begin{figure}[ht!]
\centering
\includegraphics[scale=0.45]{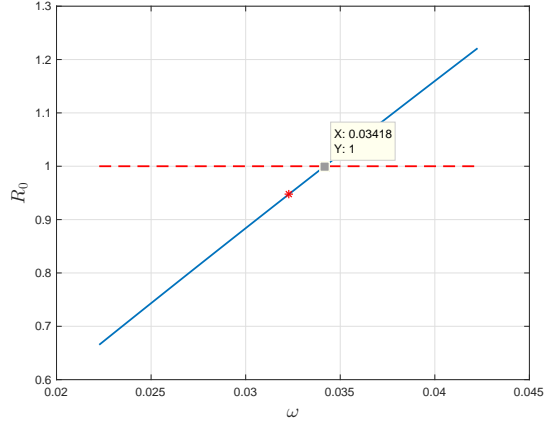}
\caption{Influence of the parameter $\omega$ in $R_0(\omega)$, 
for $\omega \in [\omega^*- 0.01, \omega^* + 0.01]$, where $\omega^*$ 
and all the other parameters take the values 
of Table~\ref{Tab_parameter_value}.}
\label{fig:R0_omega}
\end{figure} 
\begin{figure}[ht!]
\centering
\includegraphics[scale=0.45]{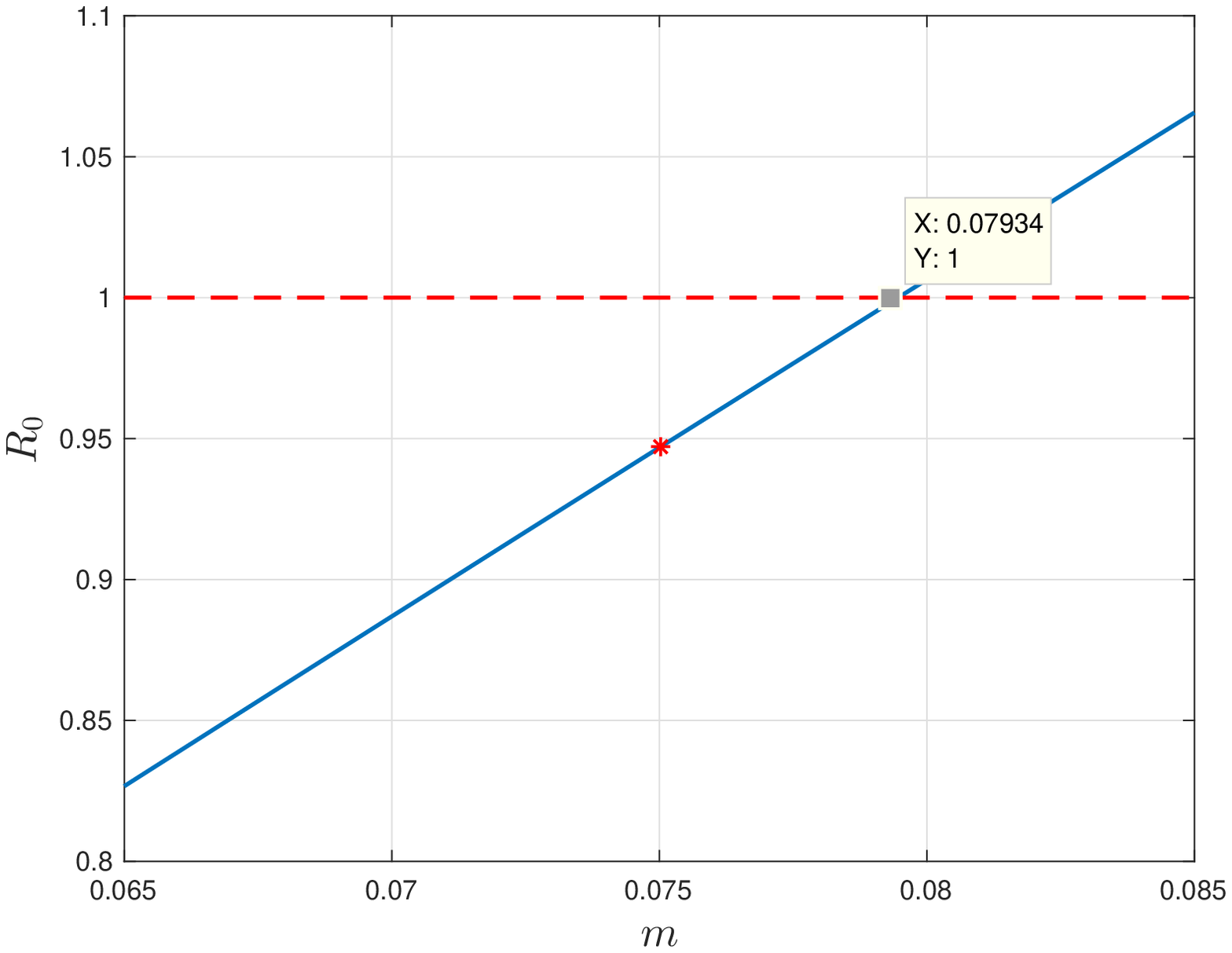}
\caption{Influence of the parameter $m$ in $R_0(m)$, for
$m \in [m^*- 0.01, m^* + 0.01]$, where $m^*$ and all the other 
parameters take the values of Table~\ref{Tab_parameter_value}.}
\label{fig:R0_m}
\end{figure} 

\begin{remark}
Our SAIQH-like compartmental epidemic model describes very well the evolution of COVID-19 in Portugal. 
Indeed, the results of the fittings are quite good: the $L^2$ error of the fit associated with 
infected individuals $I$ (see Figure~\ref{fig:I}) and hospitalized individuals $H$ 
(see Figure~\ref{fig:H}), both with respect to the initial population $N_0$, 
are equal to $9.2152e-04$ and $1.6136e-04$, respectively. Moreover, the parameter 
values we are considering, correspond to a basic reproduction number 
that is in agreement with the value given by the Portuguese authorities at the end of 
the three emergency states (cf. page 8 of the report \cite{gov-pt-2}).
\end{remark}

We can see that at the end of the 45 days period of state of emergency, 
Portugal is in a possible turning point situation where the number of 
new infected individuals can continue to decrease or, in the opposite, 
if the number of contacts between infected and susceptible individuals increases, 
the Portuguese epidemiological situation can converge to an endemic equilibrium. 
However, if social distance and all preventive measures are taken, 
the spread of corona virus can remain in values for which the Portuguese Health system 
can answer in an efficient way, has it happened until the date of 4th May, 2020.   


\section{Conclusions and Discussion}
\label{sec:conc}

SIR (Susceptible--Infectious--Recovered) 
type models provide powerful tools to describe and control 
infection disease dynamics \cite{MR3054565,MR3508846,Silva:2015}. 
Recently, several mathematical models have been developed 
to study the COVID-19 pandemic. These include 
\cite{MR4093642}, where the case of Wuhan is addressed;
\cite{MR4097809}, where the COVID-19 epidemic in Morocco is analyzed;
\cite{MR4099213}, which investigates the situation in Costa Rica;
as well as several other investigations studying the reality of many other countries: 
Canada \cite{MR4099358}, Italy \cite{Nature:It}, Pakistan \cite{MR4104871}, etc. 
Other studies analyze and compare the realities of more than one country: see, e.g., 
\cite{MR4099358}, for the cases of China, Italy and France; 
or \cite{MR4101941}, for the realities of South Korea, Italy, and Brazil. 
Typically, the proposed mathematical models fits well
the number of confirmed active infected cases with COVID-19.
Here we propose a new compartmental model that not only 
allows to fit well the number of confirmed active infected COVID-19 cases 
but also fits simultaneously the number of individuals that need to be hospitalized. 
This is far from being trivial and one should remark
that the standard SIR, SEIR, and SEIRQ models, as well as many other extensions
found in the literature, do not allow to do this,
while our model does. Such novelty is of primordial importance, since 
it is well known that one crucial point in COVID-19 pandemic 
is to ensure that the health systems are never overloaded.

We proposed a new compartmental model with the goal
to describe the COVID-19 pandemic in Portugal,
since its emergence, on 2nd March 2020, till 4th May 4, 2020, 
when the Portuguese authorities started releasing gradually the country 
from the severe COVID-19 confinement measures and canceling the very restrictive
State of Emergency measures. 

Figure~\ref{fig:I} shows a strictly increasing function in all the considered time interval, 
that is, the pandemic continues to propagate, since the peak of infected individuals
has not yet been reached at $t = 64$. Figure~\ref{fig:H} shows a strictly increasing function
in an interval $[0, t_{\max})$ and strictly decreasing in the interval of time $(t_{\max},64]$.
We conclude that the maximum number of hospitalized individuals has already been reached
and despite the fact that the pandemic continues to spread, hospitalization in Portugal is controlled.
From Figures~\ref{fig:R0_beta}, \ref{fig:R0_p}, \ref{fig:R0_omega} and \ref{fig:R0_m}, 
we observe that the famous $R_0$ depends linearly on $\beta$, $\omega$ and $m$ parameters, 
but the same is no longer the case with the $p$ parameter. Furthermore, we can see that 
as the values of $\beta$, $\omega$ and $m$ increase, the value of $R_0$ increases as well. 
Regarding the parameter $p$, the opposite happens: as the value of $p$ increases, 
the value of the basic reproduction number $R_0$ decreases.
Such numerical conclusions make sense considering the meaning of all these parameters 
(see Table~\ref{Tab_parameter_description}) and are in agreement with the analytical results 
obtained in Section~\ref{sec:2}.

Concluding, our results show that at the end of the three emergency states
declared in Portugal, the country begins 4th May 2020 a possible turning point situation 
where the number of new infected individuals can continue to decrease or, in the opposite, 
if the number of contacts between infected and susceptible individuals increases, 
the Portuguese epidemiological situation can converge to an endemic equilibrium. 
However, if social distance and all preventive measures continue to be taken, 
the spread of corona virus can remain in values for which the Portuguese 
Health system can answer in an efficient way, has it happened until the date of 4th May, 2020.  
Because the fraction of the population that was infected
is very small and nothing is known yet about the herd immunity of the population,
there is always the danger that with the gradual return of Portuguese population to the susceptible 
class, from June and July 2020 on, a very large and rapid increase in disease transmission may result. 
It means that the possibility of a second wave of COVID-19 in Portugal is not ruled out.

From the theoretical/mathematical point of view, it remains open 
the study of the local and global stability of the endemic equilibrium.
This seems quite difficult. From one side, for the local stability, the computation 
of the eigenvalues of the Jacobian matrix evaluated at the endemic equilibrium is cumbersome. 
For the global stability, finding a suitable Lyapunov function is also a difficult task. 
These are interesting open problems.

 
\section*{Acknowledgments} 

This research was supported by the Portuguese Foundation for Science and Technology (FCT) 
within ``Project Nr.~147 -- Controlo \'Otimo e Modela\c{c}\~ao Matem\'atica da Pandemia \text{COVID-19}: 
contributos para uma estrat\'egia sist\'emica de interven\c{c}\~ao em sa\'ude na comunidade'', 
in the scope of the ``RESEARCH 4 COVID-19'' call financed by FCT,
and by project UIDB/04106/2020 (CIDMA). Silva is also supported by national funds (OE), 
through FCT, I.P., in the scope of the framework contract foreseen in the numbers 
4, 5 and 6 of the article 23, of the Decree-Law 57/2016,
of August 29, changed by Law 57/2017, of July 19.
The authors are grateful to four anonymous reviewers
for many suggestions and invaluable comments,
which helped them to improve their manuscript.




\begin{thebibliography}{--}

\bibitem{MR3813045} 
\newblock K. Allali, S. Harroudi\ and\ D. F. M. Torres, 
\newblock Analysis and optimal control of an intracellular delayed HIV model with CTL immune response, 
\newblock \emph{Math. Comput. Sci.} {\bf 12} (2018), no.~2, 111--127. 
\newblock doi:{\href{https://doi.org/10.1007/s11786-018-0333-9}{\detokenize{10.1007/s11786-018-0333-9}}}.
\newblock {\tt arXiv:1801.10048}

\bibitem{MR3810766} 
\newblock I. Area, F. Nda\"{\i}rou, J. J. Nieto, C. J. Silva\ and\ D. F. M. Torres,
\newblock Ebola model and optimal control with vaccination constraints, 
\newblock \emph{J. Ind. Manag. Optim.} {\bf 14} (2018), no.~2, 427--446. 
\newblock doi:{\href{https://doi.org/10.3934/jimo.2017054}{\detokenize{10.3934/jimo.2017054}}}.
\newblock {\tt arXiv:1703.01368}
	
\bibitem{BOCCALETTI2020109794} 
\newblock S. Boccaletti, W. Ditto, G. Mindlin and\ A. Atangana,
\newblock Modeling and forecasting of epidemic spreading: The case of COVID-19 and beyond,
\newblock \emph{Chaos, Solitons and Fractals} \textbf{135}, 109--794 (2020).
\newblock doi:{\href{https://doi.org/10.1016/j.chaos.2020.109794}{\detokenize{10.1016/j.chaos.2020.109794}}}.

\bibitem{MR4097809}
\newblock A. Bouchnita\ and\ A. Jebrane, 
\newblock A multi-scale model quantifies the impact of limited movement 
of the population and mandatory wearing of face masks in containing 
the COVID-19 epidemic in Morocco, 
\newblock \emph{Math. Model. Nat. Phenom.} {\bf 15} (2020).
\newblock doi:{\href{https://doi.org/10.1051/mmnp/2020016}{\detokenize{10.1051/mmnp/2020016}}}.
	
\bibitem{MR1822695} 
\newblock F. Brauer\ and\ C. Castillo-Ch\'{a}vez, 
\newblock \emph{Mathematical models in population biology and epidemiology}, 
\newblock Texts in Applied Mathematics, vol.~40,
\newblock Springer-Verlag, New York (2001).
\newblock doi:{\href{https://doi.org/10.1007/978-1-4757-3516-1}{\detokenize{10.1007/978-1-4757-3516-1}}}.


\bibitem{Castillo-Chavez:MBE:2004}
\newblock C. Castillo-Chavez\ and\ B. Song, 
\newblock Dynamical models of tuberculosis and their applications,
\newblock \emph{Math. Biosci. Eng.} {\bf 1} (2004), no.~2, 361--404.
\newblock doi:{\href{https://doi.org/10.3934/mbe.2004.1.361}{\detokenize{10.3934/mbe.2004.1.361}}}.


\bibitem{MR4099213} 
\newblock L. F. Chaves, L. A. Hurtado, M. R. Rojas, M. D. Friberg, R. M. Rodr\'{\i}guez\ and\ M. L. Avila-Aguero,
\newblock COVID-19 basic reproduction number and assessment of initial suppression policies in Costa Rica, 
\newblock \emph{Math. Model. Nat. Phenom.} {\bf 15} (2020). 
\newblock doi:{\href{https://doi.org/10.1051/mmnp/2020019}{\detokenize{10.1051/mmnp/2020019}}}.

\bibitem{MR4099358} 
\newblock V. K. R. Chimmula\ and\ L. Zhang, 
\newblock Time series forecasting of COVID-19 transmission in Canada using LSTM networks, 
\newblock \emph{Chaos Solitons Fractals} {\bf 135} (2020), 109864. 	
\newblock doi:{\href{https://doi.org/10.1016/j.chaos.2020.109864}{\detokenize{10.1016/j.chaos.2020.109864}}}.

\bibitem{CEM}
\newblock {Comit\'e Espa\~nol de Matem\'aticas}, 
\newblock Mathematics against coronavirus,
\newblock \href{http://matematicas.uclm.es/cemat/covid19/en/}{\detokenize{[Online; accessed on 04th May 2020]}}

\bibitem{Contreras}
\newblock S. Contreras et. al., 
\newblock A multi-group SEIRA model for the spread of COVID-19 among heterogeneous populations,
\newblock \emph{Chaos, Solitions Fractals} {\bf 136} (2020), 1099325.
\newblock doi:{\href{https://doi.org/10.1016/j.chaos.2020.109925}{\detokenize{10.1016/j.chaos.2020.109925}}}.

\bibitem{Crokidakis}
\newblock N. Crokidakis, 
\newblock COVID-19 spreading in Rio de Janeiro, Brazil: Do the policies of social isolation really work?, 
\newblock \emph{Chaos, Solitions Fractals} {\bf 136} (2020), 109930.
\newblock doi:{\href{https://doi.org/10.1016/j.chaos.2020.109930}{\detokenize{10.1016/j.chaos.2020.109930}}}.
	
\bibitem{dgs-covid}
\newblock {Dire\c{c}\~{a}o-Geral da Sa\'{u}de -- COVID-19}, 
\newblock {Ponto de Situa\c{c}\~{a}o Atual em Portugal}.
\newblock \href{https://covid19.min-saude.pt/ponto-de-situacao-atual-em-portugal/}{\detokenize{[Online; last accessed on 4th May 2020]}}

\bibitem{MR4078858} 
\newblock D. Fanelli\ and\ F. Piazza, 
\newblock Analysis and forecast of COVID-19 spreading in China, Italy and France, 
\newblock \emph{Chaos Solitons Fractals} {\bf 134} (2020), 109761. 
\newblock doi:{\href{https://doi.org/10.1016/j.chaos.2020.109761}{\detokenize{10.1016/j.chaos.2020.109761}}}.

\bibitem{Nature:It} 
\newblock G. Giordano, F. Blanchini, R. Bruno , P. Colaneri, A. D. Filippo, A. D. Matteo and M. Colaneri,
\newblock Modelling the COVID-19 epidemic and implementation of population-wide interventions in Italy,
\newblock \emph{Nature Medicine} (2020). 
\newblock doi:{\href{https://doi.org/10.1038/S41591-020-0883-7}{\detokenize{10.1038/S41591-020-0883-7}}}.

\bibitem{MR4160225:R2}
\newblock K. Hattaf\ and\ N. Yousfi, 
\newblock Dynamics of SARS-CoV-2 infection model with two modes of transmission and immune response, 
\newblock \emph{Math. Biosci. Eng.} {\bf 17} (2020), no.~5, 5326--5340. 
\newblock doi:{\href{https://doi.org/10.3934/mbe.2020288}{\detokenize{10.3934/mbe.2020288}}}.

\bibitem{Hethcote} 
\newblock H. W. Hethcote, 
\newblock The mathematics of infectious diseases,
\newblock \emph{SIAM Rev.} \textbf{42} (2000), no.~4, 599--653.
\newblock doi:{\href{https://doi.org/10.1137/S0036144500371907}{\detokenize{10.1137/S0036144500371907}}}.
	
\bibitem{jornal_negocios}
\newblock{Jornal de Neg\'{o}cios}, 
\newblock{O que fazem os portugueses na quarentena?}
\newblock \href{https://www.jornaldenegocios.pt/economia/coronavirus/detalhe/o-que-fazem-os-portugueses-na-quarentena-40-em-teletrabalho-10-fora-de-casa}{\detokenize{[Online; accessed on 27th March 2020]}}
	
\bibitem{Kim} 
\newblock Y. Kim, S. Lee, C. Chu, S. Choe, S. Hong and Y. Shin,
\newblock The characteristics of middle eastern respiratory syndrome coronavirus transmission dynamics in South Korea,
\newblock \emph{Osong Public Health Res. Perspect.} \textbf{7} (2016), no.~1, 49--55.
\newblock doi:{\href{https://doi.org/10.1016/j.phrp.2016.01.001}{\detokenize{10.1016/j.phrp.2016.01.001}}}.
	
\bibitem{MR3054565} 
\newblock B. W. Kooi, M. Aguiar and N. Stollenwerk, 
\newblock Bifurcation analysis of a family of multi-strain epidemiology models,
\newblock \emph{J. Comput. Appl. Math.} \textbf{252} (2013), 148--158.
\newblock doi:{\href{https://doi.org/10.1016/j.cam.2012.08.008}{\detokenize{10.1016/j.cam.2012.08.008}}}.
	
\bibitem{Lakshmikantham}
\newblock V. Lakshmikantham, S. Leela and A. A. Martynyuk,  
\newblock Stability analysis of nonlinear systems, 
\newblock \emph{Monographs and Textbooks in Pure and Applied Mathematics},
\newblock Marcel Dekker, Inc., New York (1989).
	
\bibitem{MR3602689} 
\newblock A. P. Lemos-Pai\~{a}o, C. J. Silva\ and\ D. F. M. Torres, 
\newblock An epidemic model for cholera with optimal control treatment, 
\newblock \emph{J. Comput. Appl. Math.} {\bf 318} (2017), 168--180. 
\newblock doi:{\href{https://doi.org/10.1016/j.cam.2016.11.002}{\detokenize{10.1016/j.cam.2016.11.002}}}.
\newblock {\tt arXiv:1611.02195}

\bibitem{AnaPaiao:JOTA} 
\newblock A. P. Lemos-Pai\~{a}o, C. J. Silva, D. F. M. Torres\ and\ E. Venturino, 
\newblock Optimal Control of Aquatic Diseases: A Case Study of Yemen's Cholera Outbreak, 
\newblock \emph{J. Optim. Theory Appl.} {\bf 185} (2020), no.~3, 1008--1030. 
\newblock doi:{\href{https://doi.org/10.1007/s10957-020-01668-z}{\detokenize{10.1007/s10957-020-01668-z}}}.
\newblock {\tt arXiv:2004.07402}

\bibitem{Maier}
\newblock B. F. Maier\ and\ D. Brockmann, 
\newblock Effective containment explains sub-exponential growth in confirmed cases of recent COVID-19 outbreak in China, 
\newblock \emph{Science} \textbf{368} (2020), 742--746.
\newblock doi:{\href{https://doi.org/10.1126/science.abb4557}{\detokenize{10.1126/science.abb4557}}}.

\bibitem{MR3508846} 
\newblock A. Mallela, S. Lenhart and N. K. Vaidya, 
\newblock {HIV}-{TB} co-infection treatment: modeling and optimal control theory perspectives,
\newblock \emph{J. Comput. Appl. Math.} \textbf{307} (2016), 143--161.
\newblock doi:{\href{https://doi.org/10.1016/j.cam.2016.02.051}{\detokenize{10.1016/j.cam.2016.02.051}}}.
	
\bibitem{Matovinovic}
\newblock J. Matovinovic,
\newblock A short history of quarantine ({V}ictor {C}. {V}aughan),
\newblock \emph{Univ. Mich. Med. Cent. J.} \textbf{35} (1969), no.~4, 224--228.

\bibitem{R3} 
\newblock A. A. Mohsen, H. F. Al-Husseiny, X. Zhou\ and\ K. Hattaf,
\newblock Global stability of COVID-19 model involving the quarantine strategy and media coverage effects,
\newblock \emph{AIMS Public Health} {\bf 7} (2020), no.~3, 587--605.
\newblock doi:{\href{https://doi.org/10.3934/publichealth.2020047}{\detokenize{10.3934/publichealth.2020047}}}.

\bibitem{MR4093642} 
\newblock F. Nda\'{\i}rou,  I. Area, J. J. Nieto\ and\ D. F. M. Torres,
\newblock Mathematical modeling of COVID-19 transmission dynamics with a case study of Wuhan, 
\newblock \emph{Chaos Solitons Fractals} {\bf 135} (2020), 109846. 
\newblock doi:{\href{https://doi.org/10.1016/j.chaos.2020.109846}{\detokenize{10.1016/j.chaos.2020.109846}}}.
\newblock {\tt arXiv:2004.10885}
	
\bibitem{pordata-infantarios}
\newblock {PORDATA: Base de Dados de Portugal Contempor\^{a}neo}, 
\newblock {Alunos matriculados no ensino pr\'{e}-escolar: total e por subsistema de ensino},
\newblock \href{https://www.pordata.pt/Portugal/Alunos+matriculados+no+ensino+pr%c3%a9+escolar+total+e+por+subsistema+de+ensino-853}{\detokenize{[Online; accessed on 20th March 2020]}}

\bibitem{pordata}
\newblock {PORDATA: Base de Dados de Portugal Contempor\^{a}neo}, 
\newblock {N\'{u}meros de Portugal: Quadro-resumo}.
\newblock \href{https://www.pordata.pt/Portugal/Quadro+Resumo/Portugal-249498}{\detokenize{[Online; accessed on 20th March 2020]}}
			
\bibitem{MR4101941}
\newblock R. F. Reis, B. de Melo Quintela, J. de Oliveira Campos, J. M. Gomes, B. M. Rocha, M. Lobosco\ and\ R. Weber dos Santos, 
\newblock Characterization of the COVID-19 pandemic and the impact of uncertainties, 
mitigation strategies, and underreporting of cases in South Korea, Italy, and Brazil, 
\newblock \emph{Chaos Solitons Fractals} {\bf 136} (2020), 109888.
\newblock doi:{\href{https://doi.org/10.1016/j.chaos.2020.109888}{\detokenize{10.1016/j.chaos.2020.109888}}}.

\bibitem{gov-pt}
\newblock {Rep\'{u}blica Portuguesa: XXII Governo}, 
\newblock {Comunica\c{c}\~{a}o -- Documentos},
\newblock \href{https://tinyurl.com/yarz9gve}{\detokenize{[Online; accessed on 13th March 2020]}}

\bibitem{gov-pt-2}
\newblock {Rep\'{u}blica Portuguesa: XXII Governo}, 
\newblock {Comunica\c{c}\~{a}o -- Documentos},
\newblock \href{https://www.portugal.gov.pt/download-ficheiros/ficheiro.aspx?v=b8560501-45e9-4421-b20a-927b6d65e964}{\detokenize{[Online; accessed on 22nd July 2020]}}

\bibitem{MR3999702} 
\newblock S. Rosa\ and\ D. F. M. Torres, 
\newblock Optimal control and sensitivity analysis of a fractional order TB model, 
\newblock \emph{Stat. Optim. Inf. Comput.} {\bf 7} (2019), no.~3, 617--625. 
\newblock doi:{\href{https://doi.org/10.19139/soic.v7i3.836}{\detokenize{10.19139/soic.v7i3.836}}}.
\newblock {\tt arXiv:1812.04507}
			
\bibitem{RTP}
\newblock {RTP Not\'{\i}cias}, 
\newblock {COVID-19},
\newblock \href{https://www.rtp.pt/noticias/pais/covid-19-apenas-89-dos-785-infetados-em-portugal-estao-internados-em-hospitais_v1213491}{\detokenize{[Online; accessed on 19th March 2020]}}

\bibitem{Silva:2015} 
\newblock C. J. Silva\ and\ D. F. M. Torres, 
\newblock A TB-HIV/AIDS coinfection model and optimal control treatment, 
\newblock \emph{Discrete Contin. Dyn. Syst.} {\bf 35} (2015), no.~9, 4639--4663. 
\newblock doi:{\href{https://doi.org/10.3934/dcds.2015.35.4639}{\detokenize{10.3934/dcds.2015.35.4639}}}.
\newblock {\tt arXiv:1501.03322}

\bibitem{Tognotti:quarantine} 
\newblock E. Tognotti, 
\newblock {L}essons from the {H}istory of {Q}uarantine, from {P}lague to {I}nfluenza {A},
\newblock \emph{Emerg. Infect. Dis.} \textbf{19} (2013), no.~2, 254--259.
\newblock doi:{\href{https://doi.org/10.3201/eid1902.120312}{\detokenize{10.3201/eid1902.120312}}}.

\bibitem{Driessche} 
\newblock P. van~den Driessche\ and\ J. Watmough, 
\newblock Reproduction numbers and sub-threshold
endemic equilibria for compartmental models of disease transmission,
\newblock \emph{Math. Biosci.} \textbf{180} (2002), 29--48.
\newblock doi:{\href{https://doi.org/10.1016/S0025-5564(02)00108-6}{\detokenize{10.1016/S0025-5564(02)00108-6}}}.

\bibitem{MR4104871} 
\newblock M. Yousaf, S. Zahir, M. Riaz, S. M. Hussain\ and\ K. Shah,
\newblock Statistical analysis of forecasting COVID-19 for upcoming month in Pakistan, 
\newblock \emph{Chaos Solitons Fractals} {\bf 138} (2020), 109926. 
\newblock doi:{\href{https://doi.org/10.1016/j.chaos.2020.109926}{\detokenize{10.1016/j.chaos.2020.109926}}}.
			
\bibitem{WHO:corona}
\newblock {World Health Organization}, 
\newblock {Q\&A on coronaviruses (COVID-19)},
\newblock \href{https://www.who.int/news-room/q-a-detail/q-a-coronaviruses}{\detokenize{[Online; available since 09th March 2020]}}
			
\bibitem{Worldometer}
\newblock {Worldometer}, 
\newblock {Coronavirus Symptoms (COVID-19)}.
\newblock \href{https://www.worldometers.info/coronavirus/coronavirus-symptoms/}{\detokenize{[Online; available since 29th February 2020]}}

\bibitem{mor:r3}
\newblock H. Zine, A. Boukhouima, E. M. Lotfi, M. Mahrouf, D. F. M. Torres and N. Yousfi,
\newblock A stochastic time-delayed model for the effectiveness of Moroccan COVID-19 deconfinement strategy,
\newblock \emph{Math. Model. Nat. Phenom.} {\bf 15} (2020), Art.~50, 14~pp.
\newblock doi:{\href{https://doi.org/10.1051/mmnp/2020040}{\detokenize{10.1051/mmnp/2020040}}}.
\newblock {\tt arXiv:2010.16265}

\end{thebibliography}
\end{document}